\documentclass{article}
\usepackage{ijcai26}

\usepackage{times}
\usepackage{soul}
\usepackage{url}
\usepackage[hidelinks]{hyperref}
\usepackage[utf8]{inputenc}
\usepackage[small]{caption}
\usepackage{graphicx}
\usepackage{amsmath}
\usepackage{amsthm}
\usepackage{comment}
\usepackage{bm}
\usepackage{amssymb}
\usepackage{booktabs}
\usepackage{algorithm}
\usepackage{algorithmic}
\usepackage[switch]{lineno}

\newtheorem{theorem}{Theorem}
\newtheorem{proposition}[theorem]{Proposition}

\title{Cost of Structural Learning under Censored Feedback: \\ A Threshold-Bandit Approach}

\author{
Michael Ledford
\And
William Regli\\
\affiliations
University of Maryland, College Park\\
\emails
\{mledfor, regli\}@umd.edu
}

\begin{document}

\maketitle

\begin{abstract} 

In many multi-agent applications, tasks yield rewards only when executed by a coalition meeting an unknown size threshold; otherwise, feedback is fully censored. This censorship creates an identifiability problem: agents cannot distinguish stochastic failure from insufficient coordination. We formalize this setting as the Threshold-Activated Cooperative Multi-Armed Bandit (TAC-MAB) and analyze it under both centralized and decentralized coordination. We show that a centralized algorithm (C-TAC) achieves cumulative regret $O(\log T)$, decomposed into a structural-search term that captures the cost of resolving feasibility under censored feedback and a statistical-monitoring term for value estimation. We then introduce \textbf{D-TAC}, a decentralized event-triggered protocol in which agents synchronize only when their structural beliefs change. Empirically, D-TAC achieves a $23{\times}$ reduction in communication relative to the centralized baseline while preserving feasibility alignment under conservative belief fusion. These results characterize the coordination cost of learning under censored feedback and show that near-centralized communication efficiency is achievable without continuous synchronization.

\end{abstract}

\section{Introduction}
\label{sec:intro}

Cooperative multi-agent systems often operate in environments where task success depends on coordinated action by multiple agents, and execution feedback becomes informative only when a sufficient coalition acts jointly. Consider domains such as search-and-rescue~\cite{cao2024hma}, where multiple agents explore an environment and success depends on deploying a coalition of the right size acting in unison. Similar coordination dependencies arise in distributed sensing, signal jamming, and logistics coordination, where agents must combine capabilities to achieve an outcome.

In these settings, the coalition size required to produce informative feedback can be viewed as a \emph{latent feasibility threshold}, a quantitative constraint that is hidden from agents but determines whether a joint action yields any reward. When too few agents are deployed, execution fails silently without producing informative feedback, making stochastic failure indistinguishable from insufficient coordination.

Existing models for cooperative decision making typically assume execution feedback is informative whenever agents act, enabling learning through independent or weakly coordinated exploration~\cite{landgren2021distributed}. This challenge persists across modeling frameworks: MARL methods under sparse rewards~\cite{mahajan2019maven} treat feasibility as a learning problem but do not exploit its structural nature, while Dec-POMDP formulations~\cite{oliehoek2016concise} assume observation models that censored feedback violates. A multi-armed bandit framing, by contrast, isolates the allocation decision from state and policy dynamics, making the coordination cost itself analytically tractable. When task outcomes are gated by unknown coalition-size requirements, execution attempts below the feasibility threshold yield fully censored feedback, and the probability of independently forming a feasible coalition vanishes as coalition-size requirements grow. Independent exploration thus cannot reliably resolve feasibility on its own.

Coordination requirements could in principle be specified offline, but feasibility often depends on transient or adversarial conditions that resist pre-programming, and offline training suffers from the sim-to-real gap~\cite{tobin2017domain}. Teams must therefore infer feasibility online. Yet committing more agents is not free: practical deployments incur communication and execution costs~\cite{chakraborty2017coordinated} (energy, bandwidth, time) when allocating larger coalitions. This creates an exploration dilemma: independent learning fails under censored feedback, while naive full coordination is prohibitively costly. We measure the resulting inefficiency using regret, which captures the cost of operating under incorrect feasibility beliefs.

To address this challenge, we introduce the \textbf{Threshold-Activated Cooperative Multi-Armed Bandit (TAC-MAB)}, a model that isolates the structural difficulty of feasibility-gated feedback. We first analyze an idealized centralized baseline (C-TAC) and prove $O(\log T)$ cumulative regret, decomposed into a structural-search term that captures the cost of resolving feasibility under censored feedback and a statistical-monitoring term for value estimation (Theorem~\ref{thm:regret}).

We then propose \textbf{D-TAC}, a decentralized event-triggered protocol in which agents synchronize only when structural beliefs change. Once structural beliefs align, D-TAC executes the same plan as C-TAC and empirically approaches its asymptotic performance, modulo a transient during the structural-learning phase (Section~\ref{sec:dtac}). Empirically, D-TAC achieves a $23{\times}$ reduction in communication relative to the centralized baseline while preserving feasibility alignment through conservative belief fusion.

\section{Problem Formulation}
\label{sec:problem}

We study a cooperative multi-agent learning problem in which $M$ agents must allocate themselves across $K$ tasks, where each task requires an unknown integer number of agents to activate. Below this threshold, execution attempts yield no informative feedback. We refer to this setting as the \textbf{Threshold-Activated Cooperative Multi-Armed Bandit (TAC-MAB)}.

TAC-MAB is a multi-agent specialization of the censored semi-bandit framework of Verma~\textit{et~al.}~\shortcite{verma2019censored}, which studies single-learner allocation with continuous-valued thresholds and a divisible resource budget. Our setting differs in three structural ways: the resource is a coalition of $M$ distinct agents (rather than a divisible budget), thresholds are integer-valued (rather than continuous), and observations are partitioned by agent assignment under decentralized execution. These differences enable the study of decentralized coordination protocols, which we develop in Section~\ref{sec:dtac}.

\subsection{Agents and Task Types}
\label{sec:agents-tasks}

The team consists of $M$ homogeneous agents indexed by $i \in \{1,\ldots,M\}$ operating over a finite horizon $T$. The environment contains $K$ stationary tasks indexed by $k \in \{1,\ldots,K\}$, each characterized by a feasibility threshold $\tau_k \in \mathbb{Z}_{\ge 1}$ (the minimum coalition size required to activate task $k$, possibly with $\tau_k > M$ for structurally infeasible tasks), a success probability $p_k \in [0,1]$ (the probability that task $k$ succeeds when executed by a feasible coalition), and a task value $v_k \in \mathbb{R}_{+}$ (the reward obtained on a successful execution). These parameters are stationary over the horizon. Together with the fixed feasibility set $\mathcal{C}$ defined in Section~\ref{sec:objective-regret}, this makes the oracle's optimal allocation $\boldsymbol{c}^\star$ time-invariant.

An instance of TAC-MAB is identified by $(\boldsymbol{\tau}, \boldsymbol{p}, \boldsymbol{v}, M, T)$. The vectors $\boldsymbol{\tau}$ and $\boldsymbol{p}$ are unknown and must be learned through interaction. The remaining quantities are known to the team: $K$, $M$, $T$, the per-task values $\boldsymbol{v} = \{v_k\}_{k=1}^K$, and a known lower bound $p_{\min} \in (0, 1]$ such that $p_k \geq p_{\min}$ for all feasible tasks (i.e., tasks with $\tau_k \leq M$). We treat $\boldsymbol{v}$ as known; this is standard in resource-allocation problems with pre-specified task values~\cite{verma2019censored} and lets us isolate the cost of learning $(\boldsymbol{\tau}, \boldsymbol{p})$. Extending to unknown $\boldsymbol{v}$ is left to future work.  

We assume $K > M$ throughout, the regime in which the team cannot evaluate all tasks in parallel and must select a subset to attempt each round. This is what makes the allocation problem non-trivial: even with full knowledge of $(\boldsymbol{\tau}, \boldsymbol{p}, \boldsymbol{v})$, the optimal allocation solves a 0/1 knapsack with weights $\tau_k$ and values $p_k v_k$.

\subsection{Joint Actions and Censored Feedback}
\label{sec:censored-feedback}

In round $t$, the team selects $\boldsymbol{c}_t = (c_{1,t},\ldots,c_{K,t})$ with $c_{k,t} \in \{0,\ldots,M\}$ and $\sum_k c_{k,t} \leq M$. For each task $k$ and round $t$, $X_{k,t} \sim \mathrm{Bernoulli}(p_k)$ is drawn independently across tasks and rounds. The observed outcome is
\begin{equation}
Y_{k,t} =
\begin{cases}
X_{k,t}\, v_k, & \text{if } c_{k,t} \geq \tau_k, \\
0, & \text{if } c_{k,t} < \tau_k
\end{cases}
\end{equation}

\paragraph{Censoring at observation, not generation.}
$X_{k,t}$ is realized regardless of coalition size, but the team observes $Y_{k,t} = 0$ deterministically when $c_{k,t} < \tau_k$. Crucially, when $Y_{k,t} = 0$, the team cannot distinguish censoring ($c_{k,t} < \tau_k$) from stochastic failure ($X_{k,t} = 0$ with $c_{k,t} \geq \tau_k$). This is the identifiability challenge that motivates our analysis.

\paragraph{Observation convention.}
We treat $Y_{k,t}$ as the per-task aggregate outcome; this is what enters the regret definition. In our implementation, the reward $v_k$ is distributed evenly across the coalition assigned to task $k$; each agent on task $k$ thus observes the success indicator $\mathbf{1}\{Y_{k,t} > 0\}$ via its own share. The team-level cumulative reward and our regret analysis are invariant to this distribution choice.

\subsection{Observations and Communication}
\label{sec:observations}

Each agent $i$ observes only the outcome $Y_{k,t}$ for its assigned task at round $t$, via the share convention above. Agents do not observe outcomes of other tasks, nor the thresholds $\tau_k$, but each agent knows the coalition size assigned to its own task (via the coordinator in the centralized setting and via the protocol in the decentralized setting). A successful outcome ($Y_{k,t} > 0$) reveals that the executed coalition was feasible and that $X_{k,t} = 1$; a failed outcome ($Y_{k,t} = 0$) is uninformative about which of the two failure modes occurred.

Agents may communicate at the end of each round; communication, when it occurs, is reliable and synchronous within the round. We abstract communication cost as the number of messages exchanged per round (independent of payload size), corresponding to a shared-medium broadcast model for one-to-many transmissions and point-to-point unicast for direct messages. A single broadcast counts as one message, while all-to-all exchange among $M$ agents incurs $\mathcal{O}(M^2)$ messages.

We analyze TAC-MAB under two coordination architectures. In the \emph{centralized} architecture, all observations in a round are available to a coordinator that maintains global estimates and selects joint actions; this serves as a theoretical baseline (Section~\ref{sec:ctac}). In the \emph{decentralized} architecture, each agent has access only to its own observations and may exchange messages with peers subject to the cost model above; coordination protocols are developed in Section~\ref{sec:dtac}. In our centralized analysis (Section~\ref{sec:ctac}), we treat coordinator-to-agent communication as a zero-cost idealization to isolate the statistical cost of learning; decentralized communication costs are accounted for in Section~\ref{sec:dtac}.

\subsection{Objective and Regret}
\label{sec:objective-regret}

The objective of the team is to maximize cumulative reward over the horizon $T$. This requires solving two coupled sub-problems online:
\begin{enumerate}
    \item \textbf{Threshold learning under censored feedback.} Estimate $\boldsymbol{\tau}$ from observed outcomes, where failures below $\tau_k$ provide no information distinguishing insufficient coordination from stochastic failure.
    \item \textbf{Optimal allocation under resource constraints.} Given threshold estimates $\hat{\boldsymbol{\tau}}$ and value estimates $\hat{\boldsymbol{\mu}}$, select an allocation $\boldsymbol{c} \in \mathcal{C}$ that maximizes expected reward, where $\mathcal{C} = \{\boldsymbol{c} \in \mathbb{Z}^K_{\geq 0} : \sum_k c_k \leq M\}$. Since $K > M$, this is a 0/1 knapsack.
\end{enumerate}
These sub-problems are coupled through the per-round resource constraint: probing a task's threshold requires committing agents to it, which forecloses other allocations that round. Threshold learning therefore competes with reward collection for the same scarce resource.

Let $\mu_k = p_k v_k$. The oracle benchmark knows $(\boldsymbol{\tau}, \boldsymbol{\mu})$ and selects
\begin{equation}
\boldsymbol{c}^\star \in \arg\max_{\boldsymbol{c} \in \mathcal{C}} \sum_{k=1}^{K} \mathbf{1}\{c_k \geq \tau_k\}\, \mu_k,
\end{equation}
which does not depend on $t$ by stationarity. We assume a unique maximizer for clarity. Let
\[
\mu^\star \;=\; \sum_{k=1}^{K} \mathbf{1}\{c^\star_k \geq \tau_k\}\, \mu_k
\]
denote the expected reward of this optimal joint allocation. Cumulative regret is
\begin{equation}
R(T) = \mathbb{E}\!\left[\sum_{t=1}^{T} \left( \mu^\star - \sum_{k=1}^{K} \mathbf{1}\{c_{k,t} \geq \tau_k\}\, X_{k,t}\, v_k \right) \right],
\label{eq:regret}
\end{equation}
where the expectation is over $\{X_{k,t}\}$ and any randomness in the learning policy. The oracle is assumed to solve the allocation optimization exactly; regret is measured relative to this ideal benchmark.

Regret in TAC-MAB decomposes into two sources: the cost of resolving threshold uncertainty under censored feedback, and the cost of estimating the success probabilities $\boldsymbol{p}$ under the integer knapsack constraint. Theorem~\ref{thm:regret} (Section~\ref{sec:ctac}) characterizes both.

\section{Centralized Coordination Baseline}
\label{sec:ctac}

We first analyze TAC-MAB under an idealized centralized coordination architecture in which a coordinator observes all task outcomes immediately and broadcasts joint allocations at zero cost. This baseline isolates the statistical difficulty of learning under censored feedback from the logistical cost of coordination, and provides a regret reference for the decentralized strategies in Section~\ref{sec:dtac}.

\subsection{Structural and Statistical Uncertainty}
\label{sec:uncertainty}

Even with perfect information sharing, learning under TAC-MAB couples two sources of uncertainty. \emph{Structural uncertainty} arises from unknown feasibility thresholds $\boldsymbol{\tau}$ that gate informative feedback: when a coalition is below threshold, outcomes are fully censored and failures do not distinguish insufficient coordination from stochastic effects. \emph{Statistical uncertainty} arises from unknown success probabilities $\boldsymbol{p}$ once a task is executed by a feasible coalition. Theorem~\ref{thm:regret} decomposes regret along this split.

To ensure regret is measured only with respect to information limitations, we assume the coordinator has access to an exact planning oracle that solves the integer 0/1 knapsack at each round given current estimates. Under this assumption, regret is attributable to censored feedback and statistical estimation, not computational approximation. For our integer-threshold setting, the knapsack is solved exactly in $O(KM)$ time via dynamic programming.

\subsection{A Constructive Centralized Strategy}
\label{sec:ctac-algo}

We instantiate this baseline as \textbf{C-TAC} (Centralized Threshold-Activated Coordination), shown in Algorithm~\ref{alg:ctac}. We assume the \emph{identifiability condition}: every feasible task $k$ (i.e., $\tau_k \leq M$) satisfies $p_k \ge p_{\min} > 0$ for some known constant $p_{\min}$. Without this condition, feasibility cannot be distinguished from stochastic failure under censored feedback, making sublinear regret impossible.

C-TAC maintains a per-task phase $\phi_k \in \{\textsc{search}, \textsc{monitor}, \textsc{infeasible}\}$, with the first two labeled \emph{active}. A task begins in \textsc{search} with $\hat\tau_k = 1$ and advances $\hat\tau_k$ on repeated failures; the first non-zero observation at $\hat\tau_k$ transitions the task to \textsc{monitor}, freezing $\hat\tau_k$ and confining further updates to the mean-reward estimate $\hat\mu_k$. This avoids spurious pruning once feasibility is confirmed. If $\hat\tau_k$ exceeds $M$ during \textsc{search}, the task is marked \textsc{infeasible} and excluded from planning. Each round, the coordinator solves an exact 0/1 knapsack over active tasks using UCB1-based indices~\cite{auer2002finite} (Algorithm~\ref{alg:ctac}, line 3). We use linear search for threshold updates, appropriate for the team sizes we consider; larger teams could employ doubling or binary search for a $\log M$ reduction at the cost of harder analysis.

\begin{algorithm}[t]
\caption{C-TAC (Centralized Threshold-Activated Coordination)}
\label{alg:ctac}
\begin{algorithmic}[1]
\REQUIRE Tasks $K$, Agents $M$, Horizon $T$, Values $\boldsymbol{v}$, Identifiability bound $p_{\min}$, Failure budget $N_{\max}$, UCB constant $c$
\STATE \textbf{Initialize:} For each task $k$, set $\hat\tau_k \leftarrow 1$, $\phi_k \leftarrow \textsc{search}$, $N_{\text{fail},k} \leftarrow 0$, $\hat\mu_k \leftarrow 0$, $N_k \leftarrow 0$
\FOR{$t = 1, \dots, T$}
    \STATE \textbf{Planning:} $\boldsymbol{c}_t^* \leftarrow$ exact 0/1 knapsack over active tasks $\{k : \phi_k \neq \textsc{infeasible}\}$ with weights $\hat\tau_k$ and per-task indices $\hat\mu_k + v_k \sqrt{c \log t / N_k}$ ($N_k = 0$ treated as $+\infty$)
    \STATE \textbf{Execution:} Execute $\boldsymbol{c}_t^*$, observe outcomes $\{Y_{k,t}\}$
    \FOR{each task $k$ with coalition size $c_{k,t}^* \ge \hat\tau_k$}
        \IF{$\phi_k = \textsc{search}$}
            \IF{$Y_{k,t} > 0$}
                \STATE $\phi_k \leftarrow \textsc{monitor}$; $N_{\text{fail},k} \leftarrow 0$ \COMMENT{feasibility confirmed}
                \STATE $N_k \leftarrow N_k + 1$; update $\hat\mu_k$ via empirical mean
            \ELSE
                \STATE $N_{\text{fail},k} \leftarrow N_{\text{fail},k} + 1$ \COMMENT{censored failure}
                \IF{$N_{\text{fail},k} \ge N_{\max}$}
                    \STATE $\hat\tau_k \leftarrow \hat\tau_k + 1$; $N_{\text{fail},k} \leftarrow 0$ \COMMENT{linear prune}
                    \IF{$\hat\tau_k > M$}
                        \STATE $\phi_k \leftarrow \textsc{infeasible}$
                    \ENDIF
                \ENDIF
            \ENDIF
        \ELSIF{$\phi_k = \textsc{monitor}$}
            \STATE $N_k \leftarrow N_k + 1$; update $\hat\mu_k$ via empirical mean
        \ENDIF
    \ENDFOR
\ENDFOR
\end{algorithmic}
\end{algorithm}

Before stating the theoretical guarantees, we formally define the suboptimality gaps for our combinatorial setting, adapting standard definitions from the combinatorial semi-bandit literature~\cite{chen2013combinatorial,combes2015combinatorial}. Let $\mu(\boldsymbol{c}) = \sum_{j=1}^{K} \mathbf{1}\{c_j \geq \tau_j\}\, \mu_j$ denote the expected reward of any feasible allocation $\boldsymbol{c} \in \mathcal{C}$, and let $\Delta_{\boldsymbol{c}} = \mu^\star - \mu(\boldsymbol{c})$ be its corresponding suboptimality gap. For each task $k$, the task-specific gap $\Delta_k$ is defined as the minimum positive suboptimality gap among all allocations that feasibly activate task $k$:
\begin{equation}
\Delta_k = \min_{\boldsymbol{c} \in \mathcal{C} : \, c_k \geq \tau_k, \, \Delta_{\boldsymbol{c}} > 0} \Delta_{\boldsymbol{c}}
\end{equation}
Furthermore, let $\Delta_{\max} = \max_{\boldsymbol{c} \in \mathcal{C}} \Delta_{\boldsymbol{c}}$ denote the maximum possible suboptimality gap across all valid allocations.

\begin{theorem}[Centralized TAC-MAB Regret]
\label{thm:regret}
Consider the TAC-MAB problem with $K$ tasks and $M$ agents over a time horizon $T$, under centralized coordination. Setting the failure budget to $N_{\max} = \lceil \frac{2 \log T}{\log(1/(1-p_{\min}))} \rceil$, the expected cumulative team regret $R(T)$ of C-TAC satisfies:
\begin{multline}
R(T) = \mathcal{O}\Bigg(
\underbrace{\sum_{k=1}^K \frac{\min(\tau_k, M) \log T}{p_{\min}} \mu^\star}_{\text{structural search}} \,+ \\
\underbrace{\sum_{k: \tau_k \leq M} \frac{v_k^2 \log T}{\Delta_k}}_{\text{statistical monitoring}} \,+ 
\underbrace{K \Delta_{\max}}_{\text{tail failures}}
\Bigg)
\end{multline}
\end{theorem}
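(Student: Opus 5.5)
The plan is to define a ``good event'' $\mathcal{G}$ on which every structural decision of C-TAC is correct, bound the regret on $\mathcal{G}$ by splitting rounds into structural-search rounds and monitoring rounds, and charge the complement $\mathcal{G}^c$ to the tail term. Concretely, $\mathcal{G}$ is the event that no feasible task is ever pruned past its true threshold. A false prune at level $\hat\tau_k = \tau_k \le M$ requires $N_{\max}$ consecutive failures of an executed feasible coalition. By the independence of $X_{k,t}$, each such run has probability at most $(1-p_{\min})^{N_{\max}} \le T^{-2}$ by the choice of $N_{\max}$. A union bound over tasks and over at most $T$ possible run starts gives $\Pr(\mathcal{G}^c) \le K/T$. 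Since the per-round regret is at most $\Delta_{\max}$, the contribution of $\mathcal{G}^c$ is at most $T \cdot \Delta_{\max} \cdot K/T = K\Delta_{\max}$, which is the tail term. Pruning below the threshold is always correct, because $c < \tau_k$ yields deterministic censoring. Moreover, a transition to \textsc{monitor} only happens on an observed success, which certifies $\hat\tau_k \ge \tau_k$. Linear search from $\hat\tau_k=1$ therefore freezes exactly at $\hat\tau_k=\tau_k$ on $\mathcal{G}$.

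\textbf{Structural search.} On $\mathcal{G}$, task $k$ spends at most $N_{\max}$ executions at each level $1,\dots,\min(\tau_k,M)-1$, with every outcome censored. For infeasible tasks it reaches $M$ and is then dropped. At the level $\tau_k$ itself, the number of executions until the first success is geometric with mean at most $1/p_{\min}$. Hence the number of rounds in which task $k$ is executed while in \textsc{search} is at most $\min(\tau_k,M)\,N_{\max} + 1/p_{\min}$ in expectation. Using $\log(1/(1-p_{\min})) \ge p_{\min}$, this is $O(\min(\tau_k,M)\log T / p_{\min})$. I would bound the regret of any round in which some task is in \textsc{search} crudely by $\mu^\star$, since rewards are nonnegative and the team collects at least $0$. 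Summing over tasks then gives the structural-search term. The bound is loose, because simultaneous searches are counted separately, but that looseness is allowed inside the $\mathcal{O}(\cdot)$.

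\textbf{Statistical monitoring.} Once every task has left \textsc{search}, the problem is a combinatorial semi-bandit over the feasible monitored tasks, with known correct weights $\tau_k$. The planner is an exact oracle on UCB indices whose rewards lie in $[0,v_k]$. I would then invoke the CUCB analysis of Chen et al.~\cite{chen2013combinatorial}, rescaled by $v_k$. Call a round bad if it chooses an allocation with $\Delta_{\boldsymbol{c}}>0$. In a bad round some activated task $k$ has $N_k \lesssim v_k^2 \log T/\Delta_k^2$, or a confidence bound fails; the latter contributes $O(K)$ rounds in expectation, which is absorbed into the tail term. The standard counter argument then charges each bad round to such an under-sampled task and sums the gaps, giving $O(\sum_k v_k^2\log T/\Delta_k)$.

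\textbf{Main obstacle.} The hardest step will be the interleaving. Before all searches finish, monitoring tasks are planned jointly with searching tasks whose weights $\hat\tau_k$ are wrong. As a result, the allocation chosen is not one that the CUCB analysis recognizes. I would resolve this by charging every round in which any task is in \textsc{search} to the structural term, as above. The CUCB argument is then applied only to rounds with all tasks in \textsc{monitor} or \textsc{infeasible}. Since $N_k$ counts only monitor-phase samples, the counter argument still goes through on this subsequence. Finally, the three contributions are added to obtain the claimed bound.
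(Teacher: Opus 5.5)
Your proposal follows the paper's own argument: the $(1-p_{\min})^{N_{\max}}\le T^{-2}$ union bound rules out false pruning, with its complement charged as $K\Delta_{\max}$; at most $\min(\tau_k,M)$ linear-search phases of $N_{\max}=O(\log T/p_{\min})$ rounds are charged at $\mu^\star$ per round; and the monitoring term comes from the combinatorial semi-bandit analysis. Your explicit handling of the interleaving, charging every round with a task still in \textsc{search} to the structural term, fills in a step the paper leaves implicit. It is legitimate because searching tasks have $N_k=0$ and hence infinite index, so at least one of them is executed in each such round.

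The one loose step is shared with the paper. The counter argument actually gives $N_k\lesssim |S_t|^2 v_k^2\log T/\Delta_k^2$, where $|S_t|\le M$ is the number of tasks activated in round $t$. So an $M$-dependent factor is being absorbed into the $\mathcal{O}(\cdot)$ of the monitoring term, just as the paper does when it quotes the semi-bandit bound per task.
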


\begin{proof}[Proof sketch]
We decompose the expected cumulative regret into three components: structural search under censored feedback, statistical monitoring of feasible allocations, and a constant bounding the tail failures.

\emph{(i) Structural search.} For any feasible coalition allocation $c_{k,t} \geq \tau_k$, the probability of $N_{\max}$ consecutive Bernoulli$(p_k)$ failures is at most $(1-p_k)^{N_{\max}}$, which by the identifiability condition ($p_k \geq p_{\min}$) is at most $(1-p_{\min})^{N_{\max}} \leq 1/T^2$. By a union bound over $T$ rounds and $K$ tasks, with probability at least $1 - K/T$ no feasible task is incorrectly marked \textsc{infeasible}. The \textsc{search}-only pruning rule advances $\hat\tau_k$ incrementally, taking at most $\min(\tau_k, M)$ phases. Each phase costs $N_{\max} = O(\log T / p_{\min})$ rounds, during which the team forgoes a worst-case expected reward of up to $\mu^\star$ (the optimal joint allocation) per round. The \textsc{monitor} transition freezes $\hat\tau_k$ on the first feasible success, and tasks never re-enter \textsc{search}, so no further structural regret accrues. Summing over tasks yields the first term.

\emph{(ii) Statistical monitoring and tail failures.} Conditioned on feasible execution, the per-pull reward sequence $\{X_{k,t} v_k\}$ is i.i.d.\ with support $\{0, v_k\}$. Crucially, the empirical mean $\hat\mu_k \leftarrow (1/N_k) \sum_{s \leq t : \text{feasible exec on } k} Y_{k,s}$ and pull counter $N_k$ are only updated during these feasible executions, ensuring the estimates remain unbiased by censored feedback. At any round $t$, the Hoeffding bound yields a confidence radius of $v_k \sqrt{c \log t / N_k}$ for a constant $c > 0$, cleanly recovered by our planning index. Applying the gap-dependent regret bound for combinatorial semi-bandits with multi-play feedback~\cite{combes2015combinatorial} bounds the statistical regret by $O(v_k^2 \log T / \Delta_k)$ per feasible task. The tail failure term $O(K \Delta_{\max})$ arises from the exponentially decaying failure tails of the confidence intervals~\cite{auer2002finite,chen2013combinatorial}.
\end{proof}

\paragraph{Implementation details.}
While Theorem~\ref{thm:regret} dictates a conservative failure budget that scales with $\log T$, in our experiments we use a fixed budget of $N_{\max} = 5$, which provides sufficient feasibility-resolution power for the environments we consider. For the exploration radius, we use $c = 2$, the standard UCB1 constant. The mean-reward estimate $\hat\mu_k$ is updated only on feasible executions; failures below $\hat\tau_k$ contribute only to the \textsc{search}-phase pruning counter. The per-round computational cost is $\mathcal{O}(KM)$ to exactly solve the integer knapsack via dynamic programming, plus $\mathcal{O}(K)$ for the UCB index updates.

\section{Decentralized Coordination}
\label{sec:dtac}

Unlike the centralized baseline, decentralized agents lack access to a global view of outcomes. They observe only their own rewards and must rely on communication to synchronize their beliefs. The primary challenge is not just statistical estimation, but \emph{alignment on feasibility}: if agents disagree on the current hypothesis $\hat{\tau}_k$, they may compute conflicting optimal plans, leading to mis-coordinated coalitions that fail to trigger tasks.

To address this, we introduce \textbf{D-TAC} (Decentralized Threshold-Activated Coordination). D-TAC instantiates a \emph{Virtual Coordinator} at each agent: a local copy of the C-TAC planner (Section~\ref{sec:ctac}) that operates on the agent's belief state. Because every agent runs the same deterministic planner under shared rules, identical belief states produce identical joint plans without explicit negotiation. Following the ``Public Agent'' paradigm~\cite{chakraborty2017coordinated}, agents share four pieces of structure prior to deployment: the C-TAC planner, a rank-based assignment rule, belief-state fusion rules, and common synchronization triggers. Under these shared rules, communication is required only to maintain belief-state consistency, not to coordinate actions. The contribution of D-TAC lies in the event-triggered protocol that schedules belief synchronization based on structural changes; the action-coordination mechanism is inherited from prior work. Because feasibility threshold estimates are non-decreasing under conservative max-fusion and bounded by $M$, structural disagreement is self-limiting and the team stabilizes on a shared hypothesis within a finite number of events (Proposition~\ref{prop:comm}).

\subsection{The Virtual Coordinator Architecture}
\label{sec:vc-arch}

Each agent $i$ maintains a local belief state $\mathcal{B}_i^t$ comprising, for each task $k$: a threshold lower bound $\hat{\tau}_k^{lo}$, a threshold upper bound $\hat{\tau}_k^{hi}$, a phase indicator $\phi_k \in \{\textsc{search}, \textsc{monitor}, \textsc{infeasible}\}$, and reward statistics $(\hat{\mu}_k, N_k)$. The agent also stores the most recent joint plan $\boldsymbol{c}^*$. The lower bound $\hat{\tau}_k^{lo}$ is initialized to $1$ and advances on Type~II events; the upper bound $\hat{\tau}_k^{hi}$ is initialized to $M$ (the maximum coalition size) and tightens to $\min(\hat{\tau}_k^{hi}, c_{k,t})$ on observing $Y_{k,t} > 0$. 
The planner uses $\hat{\tau}_k^{hi}$ as the coalition-size weight, since it represents the smallest empirically-confirmed feasible size. Each agent also stores $\hat{\tau}_k^{lo,\text{synced}}$ and $\phi_k^{\text{synced}}$, the values of $\hat{\tau}_k^{lo}$ and $\phi_k$ at the most recent sync, used to detect Type~I breakthroughs against the last globally-known hypothesis.

\paragraph{Deterministic Consensus.}
\label{sec:rank-based}
We use \emph{consensus} in the engineering sense of a shared assignment rule, not in the distributed-systems sense of a negotiated protocol: there is no message exchange in this step. Agents are indexed by unique IDs $i \in \{1, \ldots, M\}$. If the joint plan $\boldsymbol{c}^*$ allocates $c_k^*$ agents to task $k$, agents are assigned in ID order: the first $c_1^*$ to task $1$, the next $c_2^*$ to task $2$, and so on. Agents with index beyond $\sum_k c_k^*$ idle that round. Since this rule is deterministic and depends only on $\boldsymbol{c}^*$, any two agents holding identical plans produce identical assignments. Transient disagreement may occur between syncs; agents persist with the last agreed-upon plan (``sticky execution'') to avoid mis-coordination.

\subsection{Structure-Aware Communication Protocol}
\label{sec:comm-protocol}

D-TAC uses an event-triggered protocol with a low-frequency heartbeat backstop, optimizing for the number of synchronization rounds rather than payload size (consistent with Section~\ref{sec:observations}). Agents operate silently by default. A sync is triggered when one of three events occurs:

\begin{itemize}
\item \textbf{Type~I (Feasibility Breakthrough):} agent $i$ observes $Y_{k,t} > 0$ at $c_{k,t} < \hat{\tau}_k^{lo,\text{synced}}$ or at a task with $\phi_k^{\text{synced}} = \textsc{infeasible}$. 
\item \textbf{Type~II (Structural Pruning):} agent $i$ accumulates $N_{\max}$ consecutive informative failures (coalition $\geq \hat{\tau}_k^{lo}$ in phase \textsc{search}), advancing $\hat{\tau}_k^{lo}$ and changing the planner's coalition-size weight. 
\item \textbf{Periodic Heartbeat:} every $T_h$ rounds, to bound divergence in reward estimates. Performance is robust to $T_h$ over a wide range.
\end{itemize}

D-TAC begins with a warmup phase ($t = 1, \ldots, K$) in which all $M$ agents probe each task once; a single sync at $t = K{+}1$ pools the warmup observations.

When a sync fires, agents broadcast belief states and fuse peer states under three rules. Threshold bounds are fused conservatively: $\hat{\tau}_k^{lo} \leftarrow \max_j \hat{\tau}_{k,j}^{lo}$ (monotone non-decreasing) and $\hat{\tau}_k^{hi} \leftarrow \min_j \hat{\tau}_{k,j}^{hi}$ (tightening on any peer's success). Phases combine under the precedence $\textsc{monitor} \succ \textsc{infeasible} \succ \textsc{search}$, so a single peer with empirically confirmed feasibility overrides others' \textsc{search} or pessimistic \textsc{infeasible} claims. Reward estimates are fused as the count-weighted mean $\hat{\mu}_k \leftarrow (\sum_j N_{k,j} \hat{\mu}_{k,j}) / (\sum_j N_{k,j})$ over local observations accumulated since the last sync, yielding the same posterior as pooling raw samples. Fusion is idempotent: a repeated sync with no new observations leaves belief components unchanged.

\begin{algorithm}[t]
\caption{D-TAC (Agent $i$ View)}
\label{alg:dtac}
\begin{algorithmic}[1]
\REQUIRE Tasks $\mathcal{K}$, Agents $M$, Heartbeat $T_h$, Failure budget $N_{\max}$
\STATE \textbf{Init:} $\mathcal{B}_i \leftarrow$ priors; $\boldsymbol{c}^* \leftarrow \text{Planner}(\mathcal{B}_i)$
\STATE \textbf{Warmup:} for $t = 1, \ldots, K$, all $M$ agents probe task $t$; sync at $t = K{+}1$
\FOR{$t = K{+}1, \dots, T$}
    \STATE $\text{Sync} \leftarrow (t \bmod T_h = 0)$
    \IF{Type~I or Type~II event since last sync}
        \STATE $\text{Sync} \leftarrow \text{True}$
    \ENDIF
    \IF{Sync}
        \STATE Broadcast $\mathcal{B}_i$; receive $\{\mathcal{B}_j\}$
        \STATE Fuse: $\hat{\tau}_k^{lo} \leftarrow \max_j$, $\hat{\tau}_k^{hi} \leftarrow \min_j$, $\phi_k$ by precedence, $\hat{\mu}_k$ count-weighted
        \STATE $\hat{\tau}_k^{lo,\text{synced}} \leftarrow \hat{\tau}_k^{lo}$; $\phi_k^{\text{synced}} \leftarrow \phi_k$; $\boldsymbol{c}^* \leftarrow \text{Planner}(\mathcal{B}_i)$
    \ENDIF
    \STATE Execute $k \leftarrow \boldsymbol{c}^*[i]$; observe $Y_{k,t}$
    \IF{$Y_{k,t} > 0$}
        \STATE $\hat{\tau}_k^{hi} \leftarrow \min(\hat{\tau}_k^{hi}, c_{k,t})$
        \IF{$c_{k,t} < \hat{\tau}_k^{lo}$}
            \STATE $\hat{\tau}_k^{lo} \leftarrow 1$ \COMMENT{refutation}
        \ENDIF
        \STATE Update $\hat{\mu}_k, N_k$
        \IF{$\phi_k \in \{\textsc{search}, \textsc{infeasible}\}$}
            \STATE $\phi_k \leftarrow \textsc{monitor}$
        \ENDIF
        \STATE Flag Type~I if $c_{k,t} < \hat{\tau}_k^{lo,\text{synced}}$ or $\phi_k^{\text{synced}} = \textsc{infeasible}$
    \ELSIF{$\phi_k = \textsc{search}$ \AND $c_{k,t} \geq \hat{\tau}_k^{lo}$}
        \STATE $N_{\text{fail},k} \leftarrow N_{\text{fail},k} + 1$
        \IF{$N_{\text{fail},k} \geq N_{\max}$}
            \STATE Advance $\hat{\tau}_k^{lo}$; flag Type~II
        \ENDIF
    \ENDIF
\ENDFOR
\end{algorithmic}
\end{algorithm}

\paragraph{Convergence and Communication Bound.}
D-TAC is not designed to provide worst-case regret guarantees under decentralization in this work; a formal characterization requires bounding the transient phase of mismatched plans, the expected time to belief alignment, and the impact of stale estimates, which we defer to ongoing work. Instead, we establish a finite bound on the structural communication overhead.

\begin{proposition}[D-TAC Communication Complexity]
\label{prop:comm}
Under the D-TAC protocol, the total number of structural synchronization events (Type~I and Type~II triggers) over any horizon $T$ is at most $\mathcal{O}(KM)$ network-wide before structural consensus is established at sync events.
\end{proposition}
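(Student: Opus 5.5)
We argue with a potential function on the fused structural belief state, showing that every synchronizing structural event strictly advances a bounded, monotone per-task quantity. For each task $k$, the relevant structural components of the fused state are the lower bound $\hat\tau_k^{lo}\in\{1,\dots,M+1\}$, the upper bound $\hat\tau_k^{hi}\in\{1,\dots,M\}$, and the phase $\phi_k$. Define the per-task potential
\[
\Phi_k \;=\; (\hat\tau_k^{lo}-1) \;+\; (M-\hat\tau_k^{hi}) \;+\; \mathbf{1}\{\phi_k=\textsc{monitor}\},
\]
and $\Phi=\sum_k \Phi_k$, evaluated on the fused (synced) state. Each term is bounded by $M$ or $1$, so $\Phi\le K(2M+1)=\mathcal{O}(KM)$. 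The goal is to show that, until structural consensus is reached, every sync triggered by a Type~I or Type~II event increases $\Phi$ by at least one. Then the number of such sync events is $\mathcal{O}(KM)$. Within a single sync round, several agents may fire simultaneous triggers; we count this as one network-wide event, or else charge each trigger to the distinct bound advance it contributes. The latter still gives $\mathcal{O}(KM)$ as long as distinct triggers correspond to distinct increments, so we will state the count per sync round.

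The first step is monotonicity under fusion. Max-fusion of $\hat\tau_k^{lo}$ and min-fusion of $\hat\tau_k^{hi}$ are monotone, and under the precedence $\textsc{monitor}\succ\textsc{infeasible}\succ\textsc{search}$ the \textsc{monitor} indicator never reverts. Hence no fused component moves backward between syncs. Fusion is also idempotent, so a sync with no new structural information leaves $\Phi$ unchanged; such syncs are heartbeats, which the proposition does not count.

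The second step is strict progress for each trigger type. A Type~II trigger means some agent advanced its local $\hat\tau_k^{lo}$ past the synced value, because its failure counter only counts at coalitions $\ge\hat\tau_k^{lo}$ in \textsc{search}. Max-fusion then raises the fused $\hat\tau_k^{lo}$, so $\Phi$ increases by at least one. A Type~I trigger either occurs at a task whose synced phase was \textsc{infeasible}, and fusion then sets it to \textsc{monitor} for the first time or tightens $\hat\tau_k^{hi}$, or it occurs at $c_{k,t}<\hat\tau_k^{lo,\text{synced}}$. In the latter case the success also forces $\hat\tau_k^{hi}\le c_{k,t}$. This is a strict tightening of the fused upper bound unless it was already at most $c_{k,t}$, in which case the phase bit or the upper bound must have changed earlier.

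The main obstacle is Type~I's local refutation step $\hat\tau_k^{lo}\leftarrow 1$, which breaks monotonicity of the lower bound and could in principle let Type~II events recur. We handle this by observing that Type~II can only fire in \textsc{search} phase. A successful observation moves the task to \textsc{monitor}, and by precedence it never returns to \textsc{search} after fusion. So after the first success on task $k$, at most one further Type~I event per distinct value of $\hat\tau_k^{hi}$ can occur, which is $\le M$ events. Before the first success, $\hat\tau_k^{lo}$ is monotone, giving $\le M$ Type~II events. Together with the phase bit, each task contributes at most $2M+1$ structural syncs. Summing over the $K$ tasks gives $\mathcal{O}(KM)$, at which point no further trigger can fire and the fused hypotheses coincide at every agent, which is structural consensus.
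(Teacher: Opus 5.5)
Your proof takes the same route as the paper's: Type~II syncs are charged to advances of the max-fused $\hat\tau_k^{lo}$ and Type~I syncs to tightenings of the min-fused $\hat\tau_k^{hi}$, at most $M$ of each per task, summed over the $K$ tasks. Packaging this as a potential $\Phi$ is fine. Your extra \textsc{monitor} bit covers a case the paper's sketch glosses over: a Type~I firing at a task synced as \textsc{infeasible} with $c_{k,t}=\hat\tau_k^{hi}=M$ leaves the upper bound unchanged. Counting per sync round is also the right convention. Per-agent counting can reach order $KM^2$, because all $c_{k,t}$ agents on a task fire the same trigger for the same increment.

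The step that does not go through is the Type~I count after the first success, which the paper also only asserts. Your fallback that ``the phase bit or the upper bound must have changed earlier'' charges nothing to the current event. Your claim of at most one Type~I per value of $\hat\tau_k^{hi}$ fails in exactly the state your refutation paragraph creates:
\begin{itemize}
\item The reset $\hat\tau_k^{lo}\leftarrow 1$ is applied only by the $c_{k,t}$ agents on task $k$.
\item When $c_{k,t}<M$, some agent off task $k$ still holds the old value $L$. Max-fusion therefore restores $\hat\tau_k^{lo,\text{synced}}=L$, while $\hat\tau_k^{hi}$ drops to $c_{k,t}<L$.
\item The planner then staffs task $k$ with $\hat\tau_k^{hi}$ agents. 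Every later success again satisfies $c_{k,t}<\hat\tau_k^{lo,\text{synced}}$ and fires Type~I, with the upper bound and phase unchanged.
\end{itemize}
So $\Phi$ stays flat, and nothing in your argument excludes $\Theta(T)$ such syncs. The absorbing \textsc{monitor} phase blocks Type~II recurrence, as you note, but not Type~I recurrence. Also, max-fusion usually overrides the reset rather than lowering the fused bound, so your Step~1 and your obstacle paragraph are aimed at the wrong consequence.

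What closes the gap is proving this state unreachable, by induction over syncs. Plans and $\hat\tau_k^{lo,\text{synced}}$ change only at syncs. Every synced plan staffs a planned task with at least $\hat\tau_k^{lo,\text{synced}}$ agents: a \textsc{search} task at a weight no smaller than $\hat\tau_k^{lo}$, and a \textsc{monitor} task at $\hat\tau_k^{hi}$, which the induction keeps $\ge\hat\tau_k^{lo}$. No plan staffs a task synced as \textsc{infeasible}. A \textsc{search} task's first success sets $\hat\tau_k^{hi}=\min(M,c_{k,t})\ge\hat\tau_k^{lo}$, after which $\hat\tau_k^{lo}$ is frozen. Then neither Type~I condition can hold after warmup, no refutation ever occurs, and $\hat\tau_k^{lo,\text{synced}}$ is genuinely monotone. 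The structural count reduces to at most $M$ Type~II syncs per task.
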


\begin{proof}[Proof sketch]
We bound the number of sync events triggered by structural changes 
team-wide. Type~II events advance $\hat{\tau}_k^{lo}$, whose team-synced 
value is monotone non-decreasing through $\{1, \ldots, M{+}1\}$ under 
max-fusion; at most $M$ such advances per task. Type~I events fire when 
successes refute the team's synced hypothesis; each such event tightens 
the team-synced upper bound, which is monotone non-increasing through 
the same range under min-fusion; at most $M$ such tightenings per task. 
Summing over $K$ tasks gives $\mathcal{O}(KM)$ structural sync events 
team-wide; beyond this, only the heartbeat fires.
\end{proof}

Once structural beliefs align (within $\mathcal{O}(KM)$ events) and all feasible tasks transition to \textsc{monitor}, D-TAC's planner produces the same plan that C-TAC would given the shared belief state. From this point, regret accrues only from statistical estimation and the bounded lag between heartbeat syncs, so D-TAC empirically approaches C-TAC's asymptotic performance, modulo a transient during the structural-learning phase. The total communication budget under D-TAC is therefore $\mathcal{O}(KM + T/T_h)$, with heartbeats dominating once structural consensus is reached.

\section{Experimental Evaluation}
\label{sec:experiments}

We evaluate TAC-MAB to quantify the cost of learning under censored feedback and to assess whether explicit coordination is necessary to escape feasibility-gated failure modes. Our experiments isolate \emph{structural difficulty} (identifying unknown coalition-size requirements) from standard statistical estimation error, and compare centralized and decentralized coordination under a shared observation model.

\paragraph{Experimental setup and baselines.}
All results are averaged over $N=40$ independent runs with different random seeds; shaded regions and error bars indicate $\pm 1$ standard error. Unless otherwise stated, methods are evaluated with horizon $T=10{,}000$, $M=5$ agents, and $K=10$ task types. The environment includes two infeasibility decoys ($\tau > M$), one full-team task ($\tau = M$), several mid-coordination tasks, and low-threshold distractors. D-TAC uses heartbeat period $T_h=50$ and failure budget $N_{\max}=5$. C-TAC uses an exact knapsack planner (Section~\ref{sec:ctac}), isolating learning and coordination effects from computational approximation.

We compare the following strategies:
\begin{itemize}
    \item \textbf{Oracle Allocation}: a full-information benchmark that knows all feasibility thresholds and success probabilities and selects the optimal allocation in every round.
    \item \textbf{Independent UCB}: a decentralized baseline in which agents independently estimate task values using UCB without communication or coordination.
    \item \textbf{C-TAC}: the centralized feasibility-aware strategy described in Section~\ref{sec:ctac} (Algorithm~\ref{alg:ctac}), which explicitly probes coalition sizes to resolve feasibility.
    \item \textbf{D-TAC}: the decentralized coordination protocol introduced in Section~\ref{sec:dtac} (Algorithm~\ref{alg:dtac}), which synchronizes agents only upon structural belief updates.
\end{itemize}
All methods receive the same local observations as defined in Section~\ref{sec:observations}; only coordination and communication differ. We omit direct comparison against decentralized cooperative-MAB baselines such as Landgren-CoopUCB~\cite{landgren2021distributed} and DDUCB~\cite{martinez2018decentralized}: faithful reproduction requires implementation choices (e.g., UCB exploration constants, fusion conventions) that could not be calibrated against authors' reference implementations. We defer this comparison to follow-up work.

\paragraph{Censored feedback breaks independent exploration.}
We first consider an environment containing one high-value cooperative task that requires full-team coordination and several low-threshold distractor tasks that provide immediate but smaller rewards. This setting isolates the censored-feedback failure mode: without explicit coordination, independent agents are unlikely to activate the cooperative task and instead converge to suboptimal alternatives.

Figure~\ref{fig:regret_time} reports cumulative regret over time. Independent UCB exhibits sustained linear regret, reflecting its inability to reliably form sufficiently large coalitions under censored feedback. Failures below the feasibility threshold provide no informative signal, so independent exploration does not guide agents toward larger coalitions. C-TAC (centralized) and D-TAC (decentralized) incur an initial coordination cost during early structure learning; once feasibility is resolved, regret growth slows substantially. The cost of learning feasibility is incurred early and does not scale with the horizon.

\paragraph{Communication efficiency.}
At $T{=}10{,}000$, D-TAC achieves cumulative regret of $13{,}700$ using $4{,}303$ messages, versus C-TAC's $4{,}500$ regret at $100{,}000$ messages---a $23{\times}$ reduction in communication while remaining within the same order of magnitude in regret. D-TAC's message budget is amortized over structural events: the periodic heartbeat accounts for most messages once feasibility is resolved, while Type~I and Type~II structural triggers fire only during the learning phase.

\begin{figure}[t]
    \centering
    \includegraphics[width=\columnwidth]{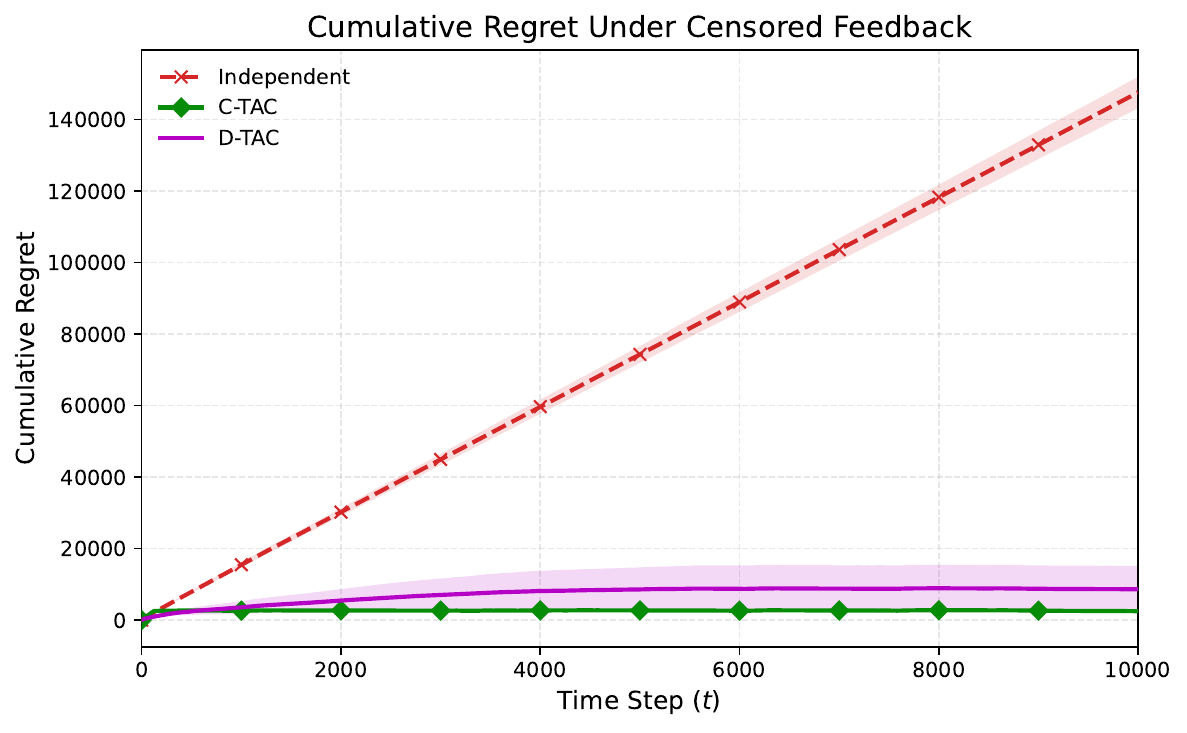}
    \caption{\textbf{Cumulative regret under censored feedback.}
    Cumulative regret $R(t)$ over $T=10{,}000$ steps, averaged over $N=40$ independent runs.
    Independent UCB exhibits sustained linear regret due to censored feedback.
    C-TAC and D-TAC incur an early coordination cost to resolve feasibility, after which regret growth slows.
    Shaded regions denote $\pm 1$ standard error.}
    \label{fig:regret_time}
\end{figure}

\paragraph{Structural hardness induced by feasibility thresholds.}
We next examine how learning difficulty scales with the magnitude of coordination requirements. We vary the maximum feasibility threshold $\tau_{\max}$ across environments while keeping the number of agents, tasks, rewards, and success probabilities fixed. As $\tau_{\max}$ increases, informative feedback becomes increasingly unlikely under uncoordinated exploration.

Figure~\ref{fig:tau_sweep} reports the final cumulative regret $R(T)$ as a function of $\tau_{\max}$. Independent UCB suffers rapidly increasing regret as coordination requirements grow, reflecting the vanishing probability of independently activating threshold-gated tasks. C-TAC maintains consistently low regret across all threshold levels by explicitly probing feasibility. D-TAC degrades gracefully and approaches centralized performance, demonstrating that event-triggered coordination preserves feasibility learning while substantially reducing communication.

\begin{figure}[t]
    \centering
    \includegraphics[width=\columnwidth]{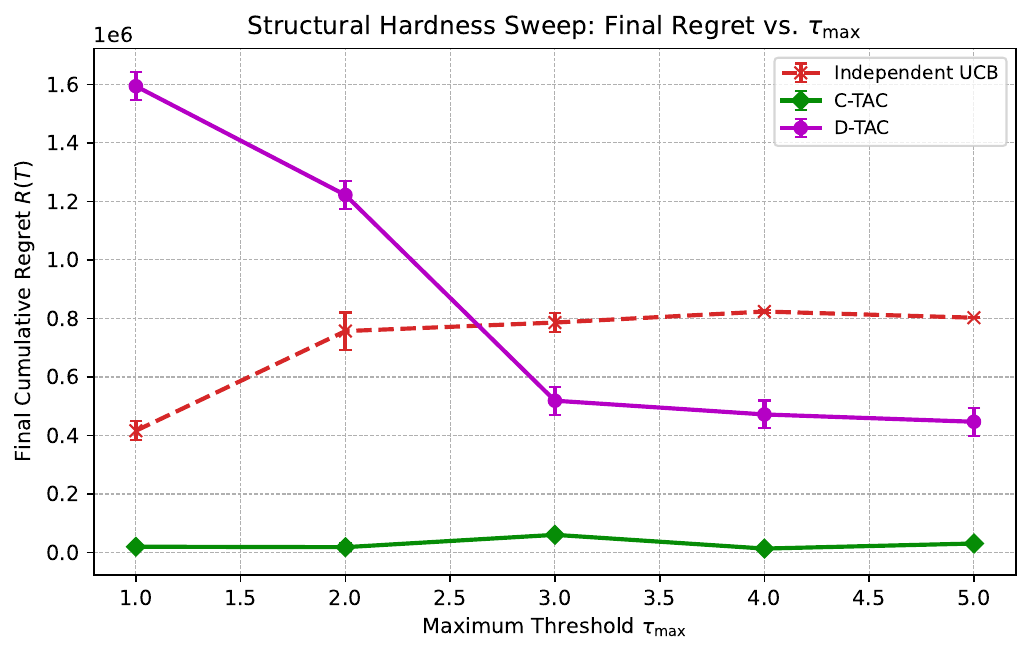}
    \caption{\textbf{Structural hardness versus regret.}
    Final cumulative regret $R(T)$ as a function of the maximum feasibility threshold $\tau_{\max}$.
    Independent UCB exhibits increasing regret as coordination requirements grow.
    C-TAC maintains low regret across all threshold levels.
    D-TAC approaches centralized performance while substantially reducing communication.
    Results are averaged over $N=40$ runs with horizon $T=10{,}000$; error bars denote $\pm 1$ standard error.}
    \label{fig:tau_sweep}
\end{figure}

\paragraph{Limitations.}
\label{sec:limitations}
Our experiments focus on stationary feasibility thresholds and synchronous rounds. In settings with non-stationary coordination requirements, delayed communication, or adversarial failures, additional mechanisms may be required to maintain feasibility alignment. While D-TAC substantially reduces communication in practice, we do not provide worst-case decentralization regret guarantees; characterizing the minimal communication required for feasibility learning under adversarial or dynamic conditions remains an open direction.

\section{Related Work}
\label{sec:related}

Our work intersects with several primary domains in sequential decision-making, each with distinct limitations when applied to decentralized, threshold-gated coordination.

\paragraph{Multi-Agent Multi-Armed Bandits (MAMAB).}
Recent advances in MAMAB explore communication-constrained coordination~\cite{landgren2021distributed,martinez2018decentralized,chang2022online,chakraborty2017coordinated,agarwal2022multi}, including event-triggered and on-demand protocols~\cite{chen2023demand}, gossip-based information sharing~\cite{chawla2020gossiping}, heterogeneous and collision-aware decentralized learning~\cite{kalathil2014decentralized,magesh2021decentralized}, and fully-decentralized cooperation without communication~\cite{chang2023optimal}. These formulations assume rewards are triggered by individual agent actions or independent arm pulls. They fail in environments where task feasibility is gated by coalition size, as independent exploration yields fully censored feedback, preventing agents from learning task values. Related work in ad hoc teamwork~\cite{barrett2014communicating} addresses coordination among agents with unknown communication protocols but focuses on uncertainty over teammate behavior rather than environmental coordination requirements.

\paragraph{Combinatorial and Knapsack Bandits.}
Centralized combinatorial multi-armed bandits (CMAB) and knapsack-based bandits study how to allocate limited resources to maximize reward over time~\cite{chen2013combinatorial,combes2015combinatorial,tran2012knapsack,das2022budgeted}. These models typically assume that resource costs or activation conditions are known in advance. In our setting, the feasibility threshold is a latent parameter that must be learned online, imposing an additional structural learning cost that standard CMAB frameworks do not address.

\paragraph{Censored Feedback and Resource Allocation.}
Bandit models with threshold-based or censored feedback have recently gained traction. Abernethy et al.~\shortcite{abernethy2016threshold} study settings where rewards are observed only if expected values exceed a known threshold; Zhang et al.~\shortcite{zhang2024learning} extend this to actively learning thresholds with latent values under single-agent censored feedback. Most closely related, Verma et al.~\shortcite{verma2019censored} introduce censored semi-bandits. This literature is restricted to centralized settings allocating a continuous, divisible resource budget; it does not address the discrete, integer-based coordination of distinct agents, nor does it provide mechanisms for decentralized agents to align on unknown thresholds under communication constraints.

\paragraph{Our Contributions.}
TAC-MAB and D-TAC address these gaps. We model task requirements as integer coalition thresholds and prove that the proposed centralized algorithm (C-TAC) achieves $O(\log T)$ cumulative regret, decomposed into a structural-search term capturing the cost of resolving feasibility under censored feedback and a statistical-monitoring term for value estimation (Theorem~\ref{thm:regret}). We then introduce a decentralized event-triggered protocol (D-TAC) that maintains feasibility alignment under conservative max-fusion and empirically achieves a $23{\times}$ reduction in communication relative to the centralized baseline (Section~\ref{sec:experiments}).

\section{Conclusion}

We introduced the Threshold-Activated Cooperative Multi-Armed Bandit (TAC-MAB), a framework for cooperative tasks whose rewards are gated by unknown coalition-size requirements and are fully censored below feasibility. Independent exploration fails structurally in this setting: agents receive no informative feedback below threshold and cannot distinguish infeasibility from stochastic failure.

Decomposing regret into feasibility learning and value estimation, we proved that the centralized algorithm achieves $O(\log T)$ cumulative regret under the identifiability condition $p_k \geq p_{\min} > 0$ for feasible tasks (Theorem~\ref{thm:regret}), with feasibility resolving in $O(\log T / p_{\min})$ probes per task. Empirically, C-TAC resolves feasibility via targeted coalition-size probing, while D-TAC approximates this behavior through event-triggered synchronization---achieving a $23{\times}$ reduction in communication relative to the centralized baseline while preserving feasibility alignment under conservative max-fusion.

Extending D-TAC to lossy or adversarial communication remains open; we are pursuing a formal regret characterization under intermittent communication in ongoing work.

\appendix
\section*{Acknowledgements}

The conclusions and opinions expressed in this research paper are those of the authors and do not necessarily reflect the official policy or position of the U.S. Government or Department of Defense.

\bibliographystyle{named}
\bibliography{ijcai26}

\end{document}